\begin{document}

\title{Trading off Worst and Expected Cost in Decision Tree Problems and a Value Dependent Model}
\author{Aline Saettler\inst{1} \and Eduardo Laber\inst{1}  \and Ferdinando Cicalese\inst{2}}
\institute{PUC-Rio, Brazil\\\email{\{asaettler,laber\}@inf.puc-rio.br} \and University of Salerno, Italy\\\email{cicalese@dia.unisa.it}}
\maketitle

\begin{abstract}
We study the problem of evaluating a discrete function by adaptively querying the values of its variables until the values read uniquely determine the value of the function. Reading the value of a variable is done at the expense of some cost, and the goal is to design a strategy (decision tree) for evaluating the function incurring as little cost as possible in the worst case or in expectation (according to a prior distribution on the possible variables assignments). Except for particular cases of the problem, in general, only the minimization of one of these two measures is addressed in the literature. However, there are instances of the problem for which the minimization of one measure leads to a strategy with a high cost with respect to the other measure (even exponentially bigger than the optimal).  We provide a new construction which can guarantee a trade-off between the two criteria. More precisely, given a decision tree guaranteeing expected cost $E$ and a decision tree guaranteeing worst cost $W$ our method can guarantee for any chosen trade-off value $\rho$ to produce a decision tree whose worst cost is $(1 + \rho)W$ and whose expected cost is
$(1 + \frac{1}{\rho})E.$ These bounds are improved for the  relevant case of uniform testing costs.

Motivated by applications, we also study a variant of the problem where the cost of reading a variable depends on the variable’s value.
We provide an $O(\log n)$ approximation algorithm for the minimization of the worst cost measure, which is best possible
under the assumption $P \neq NP.$.
\end{abstract}

\newcommand{\remove}[1]{}

\section{Introduction}

Decision tree construction is a central problem in several areas of computer science, e.g., in data base theory, in computational learning and in artificial intelligence in general. 
In a typical scenario there are several possible hypotheses, which can explain some unknown phenomenon and we want to decide which hypothesis provides the 
correct explanation. We have a prior distribution on the hypotheses and we can use tests to
discriminate among the hypotheses. Each test's outcome eliminates some of the hypotheses, and the set of tests 
is complete, in the sense that by using all the tests we can definitely find the correct hypothesis. Moreover, different tests may have different associated costs. 
The aim is to define the best testing strategy that allows to reach the correct decision spending as little as possible	. 
If the testing is adaptive a strategy is representable by a tree (called decision tree) with each node being a test and each leaf being a hypothesis. 
In a generalization of this scenario, one is only interested in identifying a class of possible hypothesis explaining the situation. 

In an example of automatic diagnosis, the hypotheses
 are possible diseases and we look for the testing strategy (decision tree) which can always identify the disease by using a cheap sequence of tests. 
In the case we are interested in deciding the drug to administer to the patient rather than exactly identifying the disease we have an instance of the more general variant 
of the decision tree construction where we are looking for the class of hypotheses containing the correct explanation. 

What is the right measure to optimize when constructing the  decision tree? Usually, the expected cost of the tests needed to reach the correct decision and
the maximum total cost needed to reach the correct decision are used. However, these measures can lead to very different trees and in 
particular it is possible that the decision tree minimizing one measure is very inefficient with respect to the other measure. A very skewed distribution can induce
 a tree optimizing the expected cost with a very skewed shape. As a consequence, in such a tree some decision might induce a very high cost, even exponentially bigger than 
 the worst cost spent by a strategy that optimizes with respect to the worst case. Conversely optimizing with respect to the worst case can lead to very 
 bad expected cost. The choice of which measure to choose is crucial especially since in practical applications the real distribution might not be known but only 
 estimated and possibly be wrong. Therefore, it might  be preferable to have decision trees which while optimizing one criteria guarantees to be 
 efficient with respect to the other. 
 
In this paper, we address the issue regarding the existence of a trade-off between the minimization 
of the worst testing cost and the
expected testing cost of decision trees. Is it is possible to 
construct decision trees that are efficient with 
respect to both measures? As mentioned before, these two goals can be incompatible.

\remove{
We present a polynomial time procedure
 that given a parameter $\rho >0$ and  two decision trees
$D_W$ and $D_E$, the former  with worst testing cost $W$ and the latter with expected testing cost $E$,  produces
a decision tree $D$ with worst testing cost at most $(1+\rho)W$ and expected
testing cost at most $(1+1/\rho)E$. In words, the combined decision tree loses at most small constant factor with respect to the 
performance of both the original decision trees which were optimizing with respect to only one measure.  
For the case of uniform testing costs,
the bound can be improved to $(1+\rho)W$ and $(1+2/(\rho^2 +2 \rho))E$ through a more involved analysis. 
}
 
The second issue on which we focus in this paper is the way the cost of the tests is defined. 
We refer the interested reader to \cite{Turney} and references quoted therein for a 
remarkable account of several types of costs to be taken into account in inference procedures. 
In most decision tree problems, the assumption is that the cost of the tests is fixed in advance and known to the algorithm. In particular, 
the cost is independent of the outcome of the test. However, there are also several scenarios in medical 
applications---one of the main fields motivating automatic diagnosis---where the assumption that a test has a fixed cost independent 
of the outcome of the test does not apply. Many diagnostic tests  actually consist of a multi-stage procedure, e.g.,  
in a first stage the sample is tested against some reagent to check for the presence or absence of an antigene. If this appears to be present below a certain level 
the test is considered to be negative and no further analysis is performed. Otherwise, the test is {\em necessarily} followed by a second stage 
where several new reagents are used with  significantly higher final costs. Notice that in such a situation there is no real decision left to the strategy between the first and the second stage, so it is reasonable to consider such a two stage procedure as a single test whose cost depends on the outcome.

Value dependent test costs are also useful in application where disruptive tests are used.
 Consider the use of bacterial colonies or {\em caviae} to test for toxicity of a samples. In the case no toxicity is found, the testing colony can be reused, as opposed to the case where toxicity is verified leading to the disruption of the colony or the death of the 
{\em cavia} (a similar model has been studied in \cite{Elser-Kleber}). Analogously, a chemical reagent might be used for performing a test and the outcome of the test is either some chemical reaction changing the nature of the reagents and making them unusable again, 
or the absence of the reaction in which case the reagent can be (partially) reused. 
Again we have a test that when positive has higher cost---the necessity of buying new reagents---than in the case of a negative outcome.


For this extended version, where the cost of a test may depend on its outcome, 
we present an algorithm for building a decision tree that aims to  minimize the worst testing cost for 
identifying the class of the correct hypothesis.


\subsection{Problem Formalization} 
\noindent
{\bf The Discrete Function Evaluation Problem} (DFEP). Our results are presented in terms of the problem of evaluating 
a discrete function. This problems generalizes most decision tree construction problems studied in the literature.

%
%

An instance of the problem is defined by  a quintuple  $(S, C, T, {\bf p}, {\bf c}),$ 
where $S = \{s_1, \dots, s_n\}$ is a set of objects, 
$C = \{C_1, \dots, C_m\}$ is a partition of $S$ into $m$ classes, $T$ is a set of tests, ${\bf p}$ is a probability distribution on 
$S,$ and ${\bf c}$ is a cost function assigning to each test $t$ a cost $c(t) \in \mathbb{Q^+}$.

A test $t \in T$,  when applied to an object  $s \in S$, outputs a number $t(s)$ in the set  $\{1,\ldots,\ell\}$ and incurs a cost $c(t)$.
It is assumed that the set of tests is complete, in the sense that for any distinct $s_1, s_2 \in S$ there exists a test
$t$ such that $t(s_1) \neq t(s_2).$ 
The goal is to define a testing procedure which uses tests from $T$ and minimizes the testing cost 
(in expectation and/or in the worst case)
for identifying the class of an unknown object $s^*$ chosen according to the distribution ${\bf p}.$ 
We also work with the extended version of the DFEP where the cost of a test
is a function that assigns each pair (test $t$, object $s$) to a value $c^{t(s)}(t) \in \mathbb{Q^+}$.

The DFEP can be rephrased in terms of minimizing the cost of evaluating a discrete function that 
maps points (corresponding to objects) from some finite subset of $\{1,\ldots,\ell\}^{|T|}$ into
values from $\{1, \dots, m\}$ (corresponding to classes), where  each object $s\in S$ corresponds to the point $(t_1(s),\ldots,t_{|T|}(s))$ 
 which is obtained by applying each test of  $T$ to $s$.  
This perspective motivates the name we chose for the problem. However, for the sake of uniformity with more recent work \cite{golovin,bellala} we employ the definition of the problem in terms of objects/tests/classes.

\medskip

\noindent
{\bf Decision Tree Optimization.} Any testing procedure can be represented by a \emph{decision tree}, which is a tree
where every internal node is associated with a test and every leaf is associated with a set
of objects that belong to the same class.
More formally, a decision tree $D$ for $(S,C,T,\mathbf{p}, {\bf c})$  is a leaf
associated with class $i$ 
if every object of $S$ belongs to the same class $i$. Otherwise, the root $r$  of $D$ is associated with some test $t \in  T$ and  
the children of $r$ are  decision trees for the non empty sets in $\{S_{t}^1,\ldots,S_{t}^\ell\}$,
where $S_{t}^i$ is the subset of $S$ that outputs $i$ for test $t$.
 
Given a decision tree $D$, rooted at $r$, we can identify the class of an unknown object 
$s^*$ by following a path from  $r$  to a leaf as follows:
first, we ask for the result of the test associated with $r$ when performed on $s^*$; then,  
we follow the branch of $r$ associated with the result of the test to
reach a child $r_i$ of $r$; next,  we apply the same steps recursively for the decision tree rooted at $r_i$.
The procedure ends when a leaf is reached, which determines the class of $s^*$.

We define $cost(D,s)$ as the sum of the tests' cost on 
the root-to-leaf path from the root of 
$D$ to the leaf associated with object $s$. 
Then, the \emph{worst testing cost}  and the \emph{expected testing cost} of $D$ are, respectively, defined as

\begin{equation}
cost_W(D) = \max_{s \in S}\{cost(D,s)\} \,\,\, \mbox{ and}  \,\,\, cost_E(D) = \sum_{s \in S} cost(D,s) p(s)
\end{equation}

\subsection{Our Results}
 We present a polynomial time procedure
 that given a parameter $\rho >0$ and  two decision trees
$D_W$ and $D_E$, the former  with worst testing cost $W$ and the latter with expected testing cost $E$,  produces
a decision tree $D$ with worst testing cost at most $(1+\rho)W$ and expected
testing cost at most $(1+1/\rho)E$. For the relevant case of uniform costs,
the bound can be improved to $(1+\rho)W$ and $(1+2/(\rho^2 +2 \rho))E$ through a more involved analysis. 

In addition, we 
present an algorithm 
for the minimization of  the worst testing cost for 
the extended version of the $DFEP$ where the cost of a test  depend on its outcome. 
We prove that our  algorithm is  an $O(ln(n))$ approximation for the 
case of binary tests.  This bound 
 is the best possible
under the assumption that $\mathcal{P} \neq \mathcal{NP}$.

\subsection{Related work}

In a recent paper \cite{labericml}, the authors
show  that for any instance $I$ of the
DFEP, with $n$ objects, it is possible to construct in polynomial time a decision tree
$D$ such that $cost_E(D) $ is $ O(\log n \cdot OPT_E(I))$
and $cost_W(D)$ is $ O(\log n \cdot OPT_W(I))$,
where $OPT_E(I)$ and $OPT_W(I)$ are,
respectively, the minimum expected  testing cost
and the minimum worst testing cost for instance $I$. 

Note that the questions we are studying here are different and possibly more fundamental than those studied in \cite{labericml}: is it possible, even
allowing exponential construction time, to build a decision tree whose expected cost is very close to the best possible expected cost achievable
and whose worst testing cost is very close to the best possible worst case achievable? How close can we get or better what is the best trade off
we can simultaneously guarantee?

For the prefix code problem
there are some studies related to the  simultaneous minimization
of  the expected testing cost and the worst case testing cost \cite{garey,larmore,fastalg,laber}.
The problem of constructing a prefix code 
is a particular case of the DFEP in which each object belongs to a distinct class, the testing costs are uniform 
and the set of tests is in one to one correspondence with the set of all binary
strings of length $n$ so that the test corresponding to a binary
string $b$ outputs 0 (1) for object $s_i$ if and only if the
$i^{th}$ bit of $b$ is 0 (1).

A number of algorithms with different  time complexities
were proposed to 
construct  decision trees with minimum expected path length (expected testing cost in DFEP terminology)
among the decision trees with depth (worst testing cost) at most $L$,
where $L$ is a given integer \cite{garey,larmore,fastalg}.

The results of Milidiu and Laber \cite{laber} imply
that for any instance  $I$ of the the prefix code problem,
 there is a decision tree $D$ such that for any integer  $c$,
with $ 0 < c \leq (n-1)- \lceil \log n \rceil $,
$Cost_W(D) - OPT_W(I) = c$ and
$Cost_E(D) - OPT_E(I) \leq 1/ \psi^{c-1}$,
where $\psi$ is the golden ratio $(1+\sqrt{5})/2$.



When the goal is to minimize only one 
measure (worst or expected testing cost), there are several 
algorithms in the literature to solve the particular version of the $DFEP$ in which 
each object belongs to a distinct class 
(\cite{garey_id,Kos_id,pandit_id,adler_id,guillory_id,laber_id,guillory2,gupta}).
 Approximation algorithms for the general version of the problem, where the number of classes  can be smaller than the number of objects, were presented by
   \cite{bellala}, \cite{golovin} and \cite{labericml}. 
For the minimization of the worst testing cost of DFEP, Moshkov  has studied the problem 
in the general case of multiway tests and non-uniform costs and provided 
an $O(\log n)$-approximation in \cite{Moshkov2}. 
Our algorithm in Section 3, generalizes Moshkov's algorithm to the value-dependent-test-cost variant of the DFEP
Moshkov \cite{Moshkov2} also proved  that
that 
no $o(\log n)$-approximation algorithm is possible under the standard  complexity assumption
$NP \not \subseteq DTIME(n^{O(\log \log n)}).$ The minimization of the worst testing cost is also investigated in 
\cite{conf/icml/GuilloryB11}  under the framework of covering and learning.
	Both \cite{bellala} and \cite{golovin} show 
 $O(\log (1/p_{min}))$ approximations for the expected testing cost (where $p_{min}$ is the minimum probability among the 
 objects in $S$) \---- the former for binary tests, and the latter for multiway 
 tests.


\section{Preliminaries and notation}
\label{sec:prelim}

In order to explain our results, we use $OPT_W(S,C,T,\mathbf{p},\mathbf{c})$ and $OPT_E(S,C,T,\mathbf{p},\mathbf{c})$, respectively, to denote the cost
of the decision tree with minimum worst testing cost
and  minimum expected testing cost for the input $(S,C,T,\mathbf{p},\mathbf{c})$.
Whenever the context permits (it will always permit) we  use the simpler notations
$OPT_W(S)$ and $OPT_E(S)$.

Let $(S,C,T,\mathbf{p},\mathbf{c})$ be an instance
of DFEP and let  $S'$ be a subset of $S$.
In addition, let $C'$ and $\mathbf{p}'$ be, respectively, the restrictions of $C$ and $\mathbf{p}$
to the set $S'$.
Our first observation is that every decision tree
$D$ for  $(S,C,T,\mathbf{p},\mathbf{c})$ is also a decision tree
for $(S',C',T,\mathbf{p}',\mathbf{c})$.
The following proposition is a direct consequence of this  observation.

\begin{proposition}
\label{prop:Subadditivity}
Let $(S,C,T,\mathbf{p},\mathbf{c})$ be an instance
of the DFEP and let  $S'$ be a subset of $S$.
Then, $OPT_E(S') \leq OPT_E(S)$ and
$OPT_W(S') \leq OPT_W(S)$.
\end{proposition}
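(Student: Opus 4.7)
The plan is to use exactly the observation that the authors spell out just before the proposition: an optimal decision tree for the larger instance is also a feasible decision tree for any sub-instance obtained by restricting to $S' \subseteq S$, and its cost measured on $S'$ can only decrease. So the proof is essentially a routine monotonicity argument; there is no real obstacle, only a couple of details to verify.

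First I would fix $D^*$ to be a decision tree achieving $OPT_W(S)$ (respectively $OPT_E(S)$) on the instance $(S,C,T,\mathbf{p},\mathbf{c})$ and argue that $D^*$ is also a legal decision tree for $(S',C',T,\mathbf{p}',\mathbf{c})$. Every leaf of $D^*$ is, by definition of decision tree, associated with a set of objects of $S$ all belonging to a single class of $C$; restricting these leaf sets to $S'$ leaves us with either the empty set or a non-empty subset of a single class of $C'$, so the tree still satisfies the recursive definition with respect to $(S',C',T)$. Furthermore, every $s \in S' \subseteq S$ traces the same root-to-leaf path in $D^*$ regardless of whether we regard the tree as being for $S$ or for $S'$, so $cost(D^*,s)$ is the same quantity in both readings.

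Next I would chain two easy inequalities. For the worst case measure, since $S' \subseteq S$,
\begin{equation*}
OPT_W(S') \;\leq\; \max_{s \in S'} cost(D^*,s) \;\leq\; \max_{s \in S} cost(D^*,s) \;=\; OPT_W(S).
\end{equation*}
For the expected case measure, since $cost(D^*,s) \geq 0$ and $p(s) \geq 0$ for every $s$, and $\mathbf{p}'$ is the restriction of $\mathbf{p}$ to $S'$,
\begin{equation*}
OPT_E(S') \;\leq\; \sum_{s \in S'} cost(D^*,s)\, p'(s) \;=\; \sum_{s \in S'} cost(D^*,s)\, p(s) \;\leq\; \sum_{s \in S} cost(D^*,s)\, p(s) \;=\; OPT_E(S).
\end{equation*}

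The only point that deserves a brief sentence of justification is the claim that $D^*$ remains a valid decision tree when we regard $S$ as replaced by $S'$; if one wished to be extra careful one could even prune branches of $D^*$ that become unreachable from $S'$, but this is not necessary because the cost quantities involved only look at objects actually in the domain. No other subtlety arises, so the proposition follows.
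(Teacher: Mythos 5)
Your proof is correct and follows exactly the route the paper intends: the paper gives no explicit proof, stating only that the proposition is a direct consequence of the observation that any decision tree for $(S,C,T,\mathbf{p},\mathbf{c})$ is also a decision tree for $(S',C',T,\mathbf{p}',\mathbf{c})$, and your argument simply makes that one-line reduction explicit (including the reasonable reading that the restriction $\mathbf{p}'$ is not renormalized). Nothing further is needed.
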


We say that a pair of objects $(s_i,s_j)$  from a set  $S$
is {\em separable} if $s_i$ and $s_j$ belong to different classes.
For a set of objects $G$ we use 
$P(G)$ to denote the number of separable pairs in
$G$. In formulae,
\begin{equation}
P(G) = \sum_{i=1}^{k-1}\sum_{j=i+1}^k n_i n_j,
\end{equation}
where $n_i$ is the number of objects in $G$ that belong to class $i$.
We say that a test $t$ {\em separates} a pair of separable objects $(s,s')$
if $t(s) \neq t(s')$.

\section{A logarithmic approximation for value dependent testing costs}\label{sec:binary}

We first consider the goal of approximating optimal decision trees with respect to the 
worst testing cost. Recall that if we apply a test $t$ 
on an object $s \in S$, getting an answer $t(s)$, we pay a cost $c^{t(s)}(t)$. Thus, each test can be associated with 
$\ell$ different costs since $t(s) \in \{1,\ldots,\ell\}$. Note that now each 
branch of a decision tree is associated with a cost, while in the classical version of the problem each internal node is associated 
with a cost.

Our algorithm, called \textsc{DividePairs}, chooses the test ${t}$ that minimizes:

\begin{equation}\label{criterio}
\max_{1 \leq i \leq \ell} \left\{\frac{c^i(t)}{P(S) - P(S^i_t)}\right\}
\end{equation}
over all available tests for the root of the tree. Then the objects in $S$ are splitted according to the values of $t$ for each object, 
and \textsc{DividePairs} is recursively called for each (non empty) new group of objects. When all objects in a group are from the same class, a leaf is created. We analyze the approximation of the algorithm when  $\ell = 2$. 
Recall that we use $S^i_{t}$ to denote  
 the subset of objects of $S$ for which test $t \in T$ outputs
$i$.

In this case, each test $t \in T$ splits $S$ in two subsets: $S^1_t$ and $S^2_t$.

In order to analyze the algorithm, 
we use $Cost_W(S)$ to denote the cost of the decision tree that \textsc{DividePairs} constructs
for a set of objects $S$.
Let $\tau$ be the first test selected by \textsc{DividePairs}.
  We can write
 the ratio between the worst testing cost of the decision tree generated by
 \textsc{DividePairs} and the cost of the decision tree with minimum worst testing cost as

\begin{equation}\label{eqratio}
\frac{Cost_W(S)}{OPT_W(S)} = \frac{\max\{ c^1(\tau) + Cost(S^1_{\tau}),c^2(\tau) + Cost(S^2_{\tau})  \}}{OPT_W(S)}
\end{equation}

Let $q$ be such that $c^{q}(\tau) + Cost(S^{q}_{\tau}) = 
\max \{c^{1}(\tau) + Cost(S^{1}_{\tau}),c^{2}(\tau) + Cost(S^{2}_{\tau}) \}$ in equation 
(\ref{eqratio}). We have that:

\begin{equation}\label{eqratio2}
\frac{Cost_W(S)}{OPT_W(S)} = \frac{c^{q}(\tau) + Cost(S^{q}_{\tau})}{OPT_W(S)} \leq \frac{c^{q}(\tau)}{OPT_W(S)} + \frac{Cost(S^{q}_{\tau})}{OPT_W(S^{q}_{\tau})}
\end{equation}
where the inequality follows from Proposition \ref{prop:Subadditivity}. The 
following lemma shows  that $OPT_W(S)$ is at least $c^{q}(\tau) P(S)/(P(S) - P(S_{\tau}^{q}))$.

\begin{lemma}
$c^{q}(\tau) P(S)/(P(S) - P(S_{\tau}^{q}))$ is a lower bound on the worst testing cost of the optimal tree.
\end{lemma}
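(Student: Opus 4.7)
Let $\alpha = \max_{i \in \{1,2\}} c^i(\tau)/(P(S) - P(S^i_\tau))$ denote the min-max value achieved by the algorithm's chosen test $\tau$. Since $q \in \{1,2\}$ is one of the branches of $\tau$, trivially $c^q(\tau)/(P(S) - P(S^q_\tau)) \leq \alpha$; hence it suffices to prove the cleaner inequality $OPT_W(S) \geq \alpha \cdot P(S)$, as this immediately yields $c^q(\tau) P(S)/(P(S) - P(S^q_\tau)) \leq \alpha P(S) \leq OPT_W(S)$.

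To prove this cleaner inequality, I fix any optimal decision tree $D^*$ and exhibit a single root-to-leaf path in $D^*$ whose total cost is at least $\alpha P(S)$; this suffices since $OPT_W(S)$ equals the cost of the most expensive such path. Assuming (without loss of generality) that every internal node of $D^*$ is useful, so that both of its branches are non-empty, I build the path greedily: at each internal node $v$ with test $t_v$, follow the branch $q_v \in \{1,2\}$ that maximizes the ratio $c^i(t_v)/(P(S) - P(S^i_{t_v}))$. By the algorithm's min-max choice of $\tau$, this maximum is at least $\alpha$, so the cost contributed at $v$ satisfies $c^{q_v}(t_v) \geq \alpha \cdot (P(S) - P(S^{q_v}_{t_v}))$.

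Writing $U_v \subseteq S$ for the set of objects reaching $v$ and $v'$ for the chosen child of $v$ (so $U_{v'} = U_v \cap S^{q_v}_{t_v}$), the main step of the argument, and the only technical subtlety I anticipate, is to convert the \emph{global} separable-pair count $P(S) - P(S^{q_v}_{t_v})$ appearing in the criterion into the \emph{local} decrease $P(U_v) - P(U_{v'})$ that will telescope along the path. Since $U_v \subseteq S$, every separable pair of $U_v$ whose endpoints are not both in $U_{v'}$ is in particular a separable pair of $S$ whose endpoints are not both in $S^{q_v}_{t_v}$, giving $P(U_v) - P(U_{v'}) \leq P(S) - P(S^{q_v}_{t_v})$ and hence $c^{q_v}(t_v) \geq \alpha (P(U_v) - P(U_{v'}))$. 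Summing these inequalities along the chosen path, the left side becomes the path's total cost (a lower bound on $OPT_W(S)$), while the right side telescopes to $\alpha(P(S) - P(U_{\text{leaf}})) = \alpha P(S)$, since the terminal leaf is single-class and therefore has no separable pairs. This completes the plan.
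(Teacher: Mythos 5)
Your proof is correct and follows essentially the same route as the paper's: bound the branching ratio at every node of an optimal tree via the greedy choice of $\tau$, pass from the global quantity $P(S)-P(S^{q_v}_{t_v})$ to the local decrease $P(U_v)-P(U_{v'})$, and telescope along a greedily chosen root-to-leaf path. The only (cosmetic) difference is that you justify the global-to-local step by an injection of separable pairs, whereas the paper derives it from the identities $P(S)=r^S_\gamma+P(S^1_\gamma)+P(S^2_\gamma)$ and $P(R)=r^R_\gamma+P(R^1_\gamma)+P(R^2_\gamma)$.
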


\textit{\textbf{Proof:}} First, we note 
that in the set of decision trees with optimal worst testing cost, 
there is a tree $D^*$ in which every internal node has two children. 
Let $v$ be an arbitrarily chosen internal node in $D^*$,
let $\gamma$ be the test associated with $v$ and let
$R \subseteq S$ be the set of objects associated with the leaves of the subtree rooted at $v$.
Let $i$ be such that $c^i(\tau)/(P(S) - P(S^i_{\tau}))$ is maximized and $j$ be such that $c^j(\gamma)/(P(S) - P(S^j_{\gamma}))$ 
is maximized.
We have that:
	
\begin{align}
\frac{c^{q}(\tau)}{P(S) - P(S_{\tau}^{q})}  
\leq \frac{c^i(\tau)}{P(S) - P(S_{{\tau}}^i)} 
 \leq \frac{c^j(\gamma)}{P(S) - P(S_{{\gamma}}^j)} \label{eqgreedy1}
 \\ \leq  \frac{c^j(\gamma)}{P(R) - P(R_{\gamma}^j)} \label{eqgreedy2} 
\end{align}

The last inequality in (\ref{eqgreedy1}) holds due 
to the greedy choice. To prove inequality (\ref{eqgreedy2}),
we only have to show that $P(S) - P(S_{\gamma}^j) \geq P(R) - P(R_{\gamma}^j)$.  
Let $r_{\gamma}^R$ (resp. $r_{\gamma}^S$) be  
the number of  pairs 
in $R$ (resp.\ $S$) separated by test $\gamma$.
Since  $R \subseteq S$ we have that $ r_{\gamma}^R \leq r_{\gamma}^S$ and
$P(R_{\gamma}^{i}) \leq P(S_{\gamma}^{i})$ for $i=1,2$.
Also, note that:

\begin{equation}\label{pairsS}
P(S) = r_{\gamma}^S + P(S_{\gamma}^{1}) + P(S_{\gamma}^{2})
\end{equation}

\begin{equation}\label{pairsR}
P(R) = r_{\gamma}^R + P(R_{\gamma}^{1}) + P(R_{\gamma}^{2})
\end{equation}

Hence, we have that $P(S) - P(S_{\gamma}^j) \geq P(R) - P(R_{\gamma}^j)$. Thus, we have concluded that 
inequality (\ref{eqgreedy2}) holds. 
	
For a node $v$, let $S(v)$ be the set of objects
associated with the leaves of the subtree rooted at $v$. 
Let $v_1, v_2, \dots, v_p$ be a root-to-leaf path on $D^*$ as follows:
$v_1$ is the root of the tree, and for each $i=1, \dots, p-1$ the node $v_{i+1}$ is a child of 
 $v_i$ associated with the branch $j$ that maximizes $c^j(t_i)/(P(S) - P(S^j_{t_i}))$, where $t_i$ is the test associated with $v_i$.  
 We denote by $c^{*}_{t_i}$ the cost that we have to pay going from $v_i$ to $v_{i+1}$.
It follows from inequaltity (\ref{eqgreedy2})
that  
\begin{equation}\label{eqaux}
 \frac{ \left [ P(S(v_i)) - P(S(v_{i+1}))\right ]c^{q}(\tau)}{P(S) - P(S_{{\tau}}^{q})} \leq c^{*}_{t_i}
\end{equation}
for $i=1,\ldots,p-1$.
Since the cost of the  path from $v_1$ to $v_p$ is not larger than the worst testing cost of
the optimal decision tree,
we have that

$$OPT_W(S) \geq \sum\limits_{i=1}^{p-1} {c^{*}_{t_i}} \geq \frac{c^{q}(\tau)}{P(S) - P(S_{\tau}^{q})} \sum\limits_{i=1}^{p-1} \left (
P(S(v_i))-P(S(v_{i+1})) \right) =   \frac{c^{q}(\tau)P(S)}{P(S) - P(S_{\tau}^{q})} 
,$$
where the second inequality follows from (\ref{eqaux})
and the last identity holds because $S(v_1)=S$ and $P(S(v_p))=0$.
$\qed$

Replacing the bound on 
$OPT_W(S)$
given by the previous lemma in  equation (\ref{eqratio2}) we get that

\begin{align}\label{eq:ratio_lb}
\frac{Cost_W(S)}{OPT_W(S)} \leq \frac{P(S)-P(S_{\tau}^{q})}{P(S)} +  \frac{Cost_W(S_{\tau}^{q})}{OPT_W(S_{\tau}^{q})}
\end{align}

Note that:

\begin{equation}\label{eq:firstterm}
\frac{P(S)-P(S_{\tau}^{q})}{P(S)} = \sum\limits_{i=1}^{P(S)-P(S_{\tau}^{q})}\Bigg(\frac{1}{P(S)}\Bigg) \leq \sum\limits_{i=1}^{P(S)-P(S_{\tau}^{q})}\Bigg(\frac{1}{P(S_{\tau}^{q}) + i}\Bigg)
\end{equation}

By induction on the number of  pairs,  we assume  that for each $G \subset S$, $Cost_W(G)/OPT_W(G) \leq H(P(G))$, where 
$H(n) = \displaystyle\sum\limits_{i=1}^{n} 1/i$. 
From (\ref{eq:ratio_lb}) and (\ref{eq:firstterm}) we have that

$$\frac{Cost_W(S)}{OPT_W(S)} \leq \sum\limits_{i=1}^{P(S)-P(S_{\tau}^{q})}\Bigg(\frac{1}{P(S_{\tau}^{q}) + i}\Bigg) + H(P(S_{\tau}^{q})) = H(P(S)) \leq 2\ln(n). $$
Thus,
we have the following theorem

\begin{theorem}
There is an $O(\log n)$ approximation for version of the DFEP with binary tests and  value dependent
costs.
\end{theorem}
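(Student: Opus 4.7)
The plan is to derive the theorem as an immediate consequence of the recursive bound in (\ref{eq:ratio_lb}) via induction on $P(S)$, the number of separable pairs. The algorithm's output is constructed recursively, so it is natural to measure progress by the quantity $P(\cdot)$, which drops strictly along each recursive call (any test chosen at the root separates at least one pair, otherwise the test would be useless for the node).

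First I would set up the induction hypothesis $Cost_W(G)/OPT_W(G) \leq H(P(G))$ for all instances with $P(G) < P(S)$, with base case $P(G)=0$ (all objects in the same class), where \textsc{DividePairs} returns a leaf of zero cost and the ratio is trivially $0 = H(0)$. For the inductive step, I would invoke inequality (\ref{eq:ratio_lb}) and the telescoping identity
\begin{equation*}
\sum_{i=1}^{P(S)-P(S_\tau^q)} \frac{1}{P(S_\tau^q)+i} + H(P(S_\tau^q)) = H(P(S)),
\end{equation*}
combined with (\ref{eq:firstterm}) and the inductive hypothesis applied to $S_\tau^q$ (which is admissible since $P(S_\tau^q) < P(S)$ because the greedy test $\tau$ separates at least one pair, so strictly decreases $P$).

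Finally, I would translate the bound $H(P(S))$ into an $O(\log n)$ bound: since $P(S) = \sum_{i<j} n_i n_j \leq \binom{n}{2} < n^2/2$, we have $H(P(S)) \leq H(n^2) \leq 1 + \ln(n^2) = 1 + 2\ln n = O(\log n)$, which gives the approximation ratio claimed. The algorithm runs in polynomial time because at each recursive call it inspects all tests (of which there are polynomially many) and computes $P(S_t^i)$ by counting pairs in the partitions induced by each test, and the recursion depth is bounded by $P(S) \leq n^2$.

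I do not anticipate a true obstacle here: the technical content is already carried by the preceding lemma (the lower bound $OPT_W(S) \geq c^q(\tau) P(S)/(P(S)-P(S_\tau^q))$) and by Proposition \ref{prop:Subadditivity}. The only subtlety worth double-checking is the well-definedness of the induction when the greedy test fails to decrease $P$ strictly; this does not happen because \textsc{DividePairs} only needs to consider tests with $P(S_\tau^q) < P(S)$ (otherwise no progress is made and such a test would not appear at the root of any reasonable decision tree). With that caveat verified, the chain of inequalities closes cleanly and the theorem follows.
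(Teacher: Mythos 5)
Your proposal follows essentially the same route as the paper: it uses the lower bound $OPT_W(S) \geq c^{q}(\tau)P(S)/(P(S)-P(S_{\tau}^{q}))$ from the lemma together with Proposition \ref{prop:Subadditivity} to obtain the recursion (\ref{eq:ratio_lb}), and then closes it by induction on $P(\cdot)$ via (\ref{eq:firstterm}) and the telescoping harmonic identity, ending with $H(P(S)) \leq 2\ln n$. The additional details you supply (base case, strict decrease of $P$ along the recursion, polynomial running time) are correct and merely make explicit what the paper leaves implicit.
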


\section{A bicriteria approximation}
\label{sec:bicriteria}
In this section,
we present an algorithm that provides a simultaneous approximation
for the minimization of expected testing cost and worst testing cost.
 There are examples in which the minimization of the expected testing
cost produces a decision tree with high worst testing cost, and the 
minimization of the worst testing
cost produces a decision tree with high expected testing cost \cite{labericml}.
Therefore, it makes sense to look for a trade-off between minimizing
both measures.

Given a positive number $\rho$, two decision trees $D_E$ and $D_W$ for the instance $(S,C,T,\mathbf{p},\mathbf{c})$,
 the former  with expected testing cost  $E$ and the latter with worst testing cost  $W$, we devise a polynomial time procedure 
 to construct a new decision tree $D$, from $D_E$ and $D_W$, with
 expected cost at most $(1 + 1/ \rho) E$ and
  worst testing cost at most $(1+ \rho) W$.
The procedure is very simple:

\medskip

{\tt CombineTrees}($D_E$,$D_W$,$\rho$)

\begin{enumerate}

\item Define a node $v$ from $D_E$ as replaceable
if the cost of the path from the root of $D_E$ to  $v$ (including $v$) is
at least $\rho W$ and the cost of the path from the root of $D_E$ to the parent of $v$
is smaller than $\rho W$. At this step we traverse $D_E$ to 
find the set  $R$ of the replaceable nodes.

\item For every node $v \in R$ do 

\begin{enumerate}

\item Let  $S(v)$  be  the set of objects associated with leaves 
located at the subtree rooted at $v$ in $D_E$. In addition, let 
$D^{S(v)}_W$ be a decision tree for $S(v)$ obtained
by disassociating every object in $S-S(v)$ from $D_W$.

\item Replace the subtree of $D_E$ rooted at $v$ with the decision tree
$D^{S(v)}_W$

\end{enumerate}

\item Return the tree  $D$ obtained by the end of Step 2. 

\end{enumerate}

\normalsize

\begin{theorem}
The decision tree $D$ has expected testing cost at most $(1+1/ \rho)E$ and worst testing cost at most $(1+\rho)W$.
\end{theorem}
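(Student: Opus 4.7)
The plan is to analyze $\mathrm{cost}(D,s)$ object by object by comparing it with $\mathrm{cost}(D_E,s)$. To this end, I would partition $S$ into the set $S_{\text{kept}}$ of objects whose leaf in $D_E$ has no replaceable ancestor, and the set $S_{\text{rep}}$ of objects that, in $D_E$, fall below some (necessarily unique) replaceable node $v\in R$, i.e.\ $s\in S(v)$. For an object in $S_{\text{kept}}$, the root-to-leaf path in $D$ coincides with that in $D_E$, and since no ancestor of the leaf is replaceable, this path has cost strictly less than $\rho W$. For an object $s\in S(v)\subseteq S_{\text{rep}}$, the path in $D$ first traverses the portion of $D_E$ from the root down to the parent of $v$---whose total cost is, by definition of replaceability, strictly smaller than $\rho W$---and then continues inside the inserted subtree $D_W^{S(v)}$.

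For the worst-cost bound, the only nontrivial case is $S_{\text{rep}}$. Here the key observation is that $D_W^{S(v)}$, being obtained from $D_W$ by discarding leaves (and pruning branches) corresponding to objects outside $S(v)$, is a decision tree for $S(v)$ whose worst testing cost is at most the worst testing cost of $D_W$, namely $W$. Combining this with the bound on the cost above $v$ yields $\mathrm{cost}(D,s) < \rho W + W = (1+\rho)W$, which, together with the trivial bound for $S_{\text{kept}}$, gives $\mathrm{cost}_W(D)\le (1+\rho)W$.

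For the expected cost bound, the plan is to prove the pointwise inequality $\mathrm{cost}(D,s)\le (1+1/\rho)\,\mathrm{cost}(D_E,s)$ for every $s\in S$ and then sum weighted by $p(s)$. For $s\in S_{\text{kept}}$ the inequality holds with equality (up to the factor $1+1/\rho>1$). For $s\in S(v)$ with $v\in R$, the replaceability condition guarantees that the root-to-$v$ path in $D_E$ has cost at least $\rho W$; since the leaf for $s$ in $D_E$ lies below $v$, we have $\mathrm{cost}(D_E,s)\ge \rho W$, while the previous paragraph gives $\mathrm{cost}(D,s)\le (1+\rho)W$. Taking the ratio yields
\begin{equation*}
\frac{\mathrm{cost}(D,s)}{\mathrm{cost}(D_E,s)} \le \frac{(1+\rho)W}{\rho W} = 1 + \frac{1}{\rho},
\end{equation*}
and summing gives $\mathrm{cost}_E(D)\le (1+1/\rho)E$.

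The subtleties I expect to watch out for are only bookkeeping: verifying that, after replacement, the test originally located at $v$ in $D_E$ is not charged again (the subtree rooted at $v$, including the test at $v$, is entirely removed and replaced by the root of $D_W^{S(v)}$), and that $D_W^{S(v)}$ is a legitimate decision tree for $S(v)$ of worst cost at most $W$. No further ingredient beyond these case analyses is needed.
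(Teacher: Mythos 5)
Your proposal is correct and follows essentially the same argument as the paper: the same partition of objects according to whether their leaf in $D_E$ lies below a replaceable node, the same bound $\rho W + W$ for the replaced objects, and the same pointwise ratio $\mathrm{cost}(D,s)/\mathrm{cost}(D_E,s) \le (1+\rho)W/(\rho W) = 1+1/\rho$ summed against $p(s)$. The bookkeeping points you flag (not double-charging the test at $v$, and $D_W^{S(v)}$ being a valid tree for $S(v)$ of worst cost at most $W$) are exactly the observations the paper relies on implicitly.
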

\begin{proof}
First we argue that the worst testing cost of $D$ is at most
$(1+\rho)W$.
Let $s$ be an object in $S$.
If $s$ is not a descendant of a replaceable node in $D_E$ then
the cost of the path from the root of $D_E$ to $s$ is at most $\rho W$.
Since this path remains the same in  $D$,
we have that the cost to reach $s$ in $D$ is at most $\rho W$.
On the other hand, if $s$ is  a descendant of a replaceable node $v$ in $D_E$, then
the cost to reach $s$ in $D$ is at most $(1+\rho)W$ because the cost of the path from the root of $D$  to the parent of $v$ is at 
 most $\rho W$ and the cost to reach $s$ from the root of the tree $D_W^{S(v)}$ is at most
$W$.

Now, we prove that the expected testing
 cost of $D$ is at most
$(1 + 1/ \rho) E$.
For that it is enough 
to show that for every object $s \in S$, the cost
to reach $s$ in $D$ is at most $(1 + 1/ \rho)$ times the cost of
reaching $s$ in $D_E$.
We split the analysis into two cases:

{\bf Case 1.}
 $s$ is not a descendant of a replaceable node in $D_E$. In this case, the cost
to reach $s$ in  $D_E$ is equal to the cost of reaching $s$ in $D$.

\medskip
{\bf Case 2}.  $s$ is  a descendant of a replaceable node $v$ in $D_E$.
Let $K$ be the cost of the path from the root of $D_E$ to  $v$. 
Then, the cost  to reach $s$ in $D_E$ 
is at least $K$.   In addition, since $v$ is replaceable we have that
 $K \geq \rho W$.
On the other hand, the cost to reach $s$ in 
$D$ is at most $\rho W +W$. 
Since $K \geq \rho W$ we have that the cost
to reach $s$ in $D$ is at most $(1+1/\rho)$ times the cost of
reaching $s$ in $D_E$.
\end{proof}

We can improve the approximation for 
the case where the costs are uniform.
In this case, we can assume unitary testing costs so that  $W$ is the height
of the decision tree $D_W$.
Let  $L$ and $M$, with $L<M$, be two positive integers whose values  will be defined during our analysis.

To obtain a better approximation,  we consider an algorithm that picks the decision tree, say  $D$, with minimum expected testing cost among
the decision trees $D^{L},D^{L+1},\ldots,D^{M}$, where $D^i$ is the decision tree returned
by {\tt CombineTrees} when it is executed
with parameters $(D_E,D_W, i/W)$.
It follows from the previous theorem that 
$$Cost_W(D) \leq (1+M/W) W =M+ W.$$

The analysis of the  expected testing cost of $D$ is more involved.
First,  we have that 

\begin{equation}
\label{eq:ub-uniformcosts} 
Cost_E(D) = \min_{i=L,L+1\ldots,M} \{Cost_E(D^i)\} \leq \frac{ \sum_{i=L}^M Cost_E(D^i)}{M-L+1} 
\end{equation}

Let $H$ be the height
of the decision tree $D_E$.
For $j=1,\ldots,H$, let 
  $C_j$ be the contribution of the leaves located at  level $j$ for
  the cost of $D_E$ so that $Cost_E(D_E)=\sum_{j=1}^H C_j$.
  It follows that 
$$ Cost_E(D^i) \leq \sum_{j=1}^i C_j + \sum_{j=i+1}^H \frac{C_j (i+W)}{j},$$ 
because the objects associated with leaves that are located at levels smaller than
or equal to $i$ are not modified from $D_E$ to $D^i$  while the remaining objects are located at levels
smaller than or equal to $i+W$ in $D^i$. Note that $C_j/j$ in the previous inequality is the sum of the probabilities
of the leaves at level $j$.
By replacing the last
expression in (\ref{eq:ub-uniformcosts}) and
grouping the terms around the $C_j$'s we get that

$$\frac{Cost_E({D})}{Cost_E(D_E)} \leq \frac{ \sum_{j=1}^H \alpha_j C_j }{ \sum_{j=1}^H  C_j} \leq   max_j\{ \alpha_j\},$$
where

\[ \alpha_j = \left\{ \begin{array}{ll}
         1 & \mbox{if $j \leq L$};\\ \\
         \frac{M-j +1 + \frac{(j-L)W+ (j-L)(j-1+L)/2 }{j}}{M-L+1} & \mbox{if $L < j \leq M$}  \\ \\
        \frac{W +(M+L)/2}{j} & \mbox{if $j \geq M+1$}
        .\end{array} 
                \right. \]

First, note that the maximum of $\alpha_j$ in the range $ j \geq M+1$ is
$(W +(M+L)/2)/j$, which is attained when $j=M+1$.
Moreover, if we replace $j=M+1$ in the formula 
of $\alpha_j$ for  the range $L <  j \leq M$ we get exactly 
$(W +(M+L)/2)/j$. Thus, 
it follows that 
$$ \frac{Cost_E({D})}{Cost_E(D_E)} \leq  \max_{j \in (0,\infty) } \left \{  \frac{M-j +1 + \frac{(j-L)W+ (j-L)(j-1+L)/2 }{j}}{M-L+1}  \right \} $$
By simple calculus we can conclude that the expression attains 
the maximum when $j= \sqrt{L^2-L+2LW}$.
Thus,

\begin{equation}
\label{eq:finalbound}
 \frac{Cost_E({D})}{Cost_E(D_E)}  \leq 1+ \frac{L+W - \sqrt{L^2-L+2LW} -1/2 }{M-L+1} \leq 1+ \frac{L+W -\sqrt{L^2+2LW}}{M-L+1}.
 \end{equation}
To verify the last  inequality we need
to do some calculations (squaring the terms) and use the fact that  $W,L\geq 1$.

Let $r$ be a number in the interval
$[0,1/W]$.
We can verify that
the righthand side of the  equation (\ref{eq:finalbound}) is upper bounded
by $ 1 +  2/(\rho^2 + 2\rho)$ 
whenever $M= \rho W$    and $L=  W  (t + r)$, where $t=\frac{\rho^2}{2\rho+2}$ (the proof is presented in the appendix). 

Thus, by setting  $M= \rho W$    and
$L= \lceil W  \rho^2 /  (2 \rho+2) \rceil$, where
$\rho$  is a positive number that can be written as $i/W$
for some integer $i$,
we obtain  the following theorem.

\begin{theorem}
Let ${\cal I}= (S,C,T,\mathbf{p},\mathbf{c})$
an instance of the DFEP where all the tests have unitary  costs.
Given  two decision trees $D_E$ and $D_W$ for the instance ${\cal I}$,
 the former  with expected testing cost  $E$ and the latter with height  $W$ and a positive number $\rho$ that
can be written as $i/W$ for some integer $i$, there exists
 a polynomial time algorithm that construts a  decision tree $D$ with  height at most $(1+ \rho) W $
 and  expected testing cost at most $\left (1 +  \frac{2}{\rho^2 + 2\rho}  \right) E.$ 
 \end{theorem}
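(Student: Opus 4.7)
The plan is to apply the \texttt{CombineTrees} procedure repeatedly with a family of trade-off parameters and return the best of the resulting trees. Specifically, I would fix integers $L \le M$ to be chosen later, run \texttt{CombineTrees}$(D_E, D_W, i/W)$ for each $i \in \{L, L+1, \dots, M\}$ to obtain trees $D^L, \dots, D^M$, and let $D$ be the one of minimum expected testing cost. By the previous theorem, $D^i$ has worst cost at most $(1 + i/W)W = W + i$, so in particular $Cost_W(D) \le W + M$. Setting $M = \rho W$ immediately gives the desired worst-case bound $(1+\rho)W$, and polynomial running time is clear since we make $O(\rho W)$ calls to a polynomial procedure.

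For the expected cost, the key inequality is
\begin{equation*}
Cost_E(D) \le \frac{1}{M - L + 1}\sum_{i=L}^M Cost_E(D^i).
\end{equation*}
I would then decompose $Cost_E(D_E) = \sum_{j=1}^H C_j$ by levels of $D_E$ and bound $Cost_E(D^i) \le \sum_{j \le i} C_j + \sum_{j > i} C_j (i+W)/j$, exploiting that leaves at depth at most $i$ are preserved while those below a replaceable node of $D_E$ end up at depth at most $i + W$ in $D^i$ (and recalling that $C_j/j$ is the total probability mass at level $j$). Substituting this estimate into the averaging inequality above and collecting the coefficient of each $C_j$ leads to a bound of the form $Cost_E(D)/Cost_E(D_E) \le \max_j \alpha_j$, where $\alpha_j$ is the piecewise function displayed in the excerpt: it equals $1$ for $j \le L$, a rational expression involving $L, M, W, j$ for $L < j \le M$, and $(W + (M+L)/2)/j$ for $j \ge M+1$.

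The main technical step is to optimize this piecewise maximum. The tail branch is decreasing in $j$ and matches the middle branch at $j = M+1$, so it suffices to maximize the middle expression over $j > 0$. Elementary calculus locates the maximum at $j^* = \sqrt{L^2 - L + 2LW}$, yielding the cleaner upper bound $1 + (L + W - \sqrt{L^2 + 2LW})/(M - L + 1)$ after a routine monotonicity argument.

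The last step, and the place I expect to do the most algebraic work, is verifying that the choice $M = \rho W$ and $L = \lceil W \rho^2/(2\rho + 2)\rceil$ makes the right-hand side at most $1 + 2/(\rho^2 + 2\rho)$. Writing $L = W(t + r)$ with $t = \rho^2/(2\rho+2)$ and $r \in [0, 1/W]$ to absorb the rounding, I would square the relevant terms to clear the square root and use $W, L \ge 1$ to close the inequality. The assumption that $\rho = i/W$ for some integer $i$ ensures $M$ is an integer so that the range $\{L, \dots, M\}$ is well defined, and the slack $r$ absorbs the ceiling in $L$ without degrading the final bound.
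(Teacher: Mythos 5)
Your proposal follows essentially the same route as the paper: running \texttt{CombineTrees} with parameters $i/W$ for $i \in \{L,\dots,M\}$, taking the tree of minimum expected cost, bounding via the average of the $Cost_E(D^i)$ with the level decomposition $\sum_j C_j$, optimizing the resulting piecewise $\alpha_j$ at $j^*=\sqrt{L^2-L+2LW}$, and closing with $M=\rho W$, $L=\lceil W\rho^2/(2\rho+2)\rceil$ and the substitution $L=W(t+r)$ exactly as in the paper's appendix. The argument is correct and matches the paper's proof in structure and in all key steps.
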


As an  example, for $\rho=2$ this new algorithm guarantees that the 
expected testing cost is at most $(5/4)E$ while
the initial algorithm guarantees a $1.5E$ upper bound.

\section{Conclusions}

We presented a polynomial time procedure that given a parameter $\rho >0$, a decision tree 
$D_W$ with worst testing cost $W$ and a decision tree $D_E$ with expected testing cost $E$, produces
a decision tree $D$ with worst testing cost at most $(1+\rho)W$ and expected
testing cost at most $(1+1/\rho)E$. When the costs are uniform,
the bound can be improved to $(1+\rho)W$ and $(1+2/(\rho^2 +2 \rho))E$. 
The main question that remains open in this topic is whether for 
every $\epsilon >0$, there is some integer $n_0$ such that
every instance $I$ of the DFEP with more than
$n_0$ objects admits a tree $D$ such that $cost_E(D) \leq (1+\epsilon) OPT_E(I)$ and 
$cost_W(D) \leq (1+\epsilon) OPT_W(I)$.
For the prefix code problem, a particular version of the DFEP explained  in the introduction, this result
holds \cite{laber}.

We also presented an approximation algorithm for the 
extended version of the DFEP where the 
cost of the tests  depend also on the answers.
For the particular case where the tests are binary, our algorithm provides a logarithmic approximation which is the best
approximation unless ${\cal P}={\cal NP}$.
An interesting question that deserves more investigation is if there exists also a logarithmic approximation algorithm for the most general 
case where the  tests can output more than two values.

\bibliographystyle{abbrv}
\bibliography{esa2014}

\appendix

\section{Calculation of Section \ref{sec:bicriteria}}

Let $r$ be a number in the interval $[0,1/W]$. We have to prove that:

$$\left [ (t+r)W+W -W\sqrt{(t+r)^2+2(t+r)} \right] \leq \frac{2(\rho W-(t+r)W+1)}{(\rho^2+2\rho)}$$

By simple algebraic manipulations we conclude that we have to prove that

$$(\rho^2+2\rho)\left [ (t+r+1) -\sqrt{(t+r)^2+2(t+r)} \right] \leq 2(\rho-(t+r)+1/W),$$

or equivalently,

$$(\rho^2+2\rho)(t+r+1) - 2(\rho-(t+r)+1/W) \leq (\rho^2+2\rho)\sqrt{(t+r)^2+2(t+r)} $$ 

Replacing $t=\frac{\rho^2}{2\rho+2}$ and using the fact that $r \leq 1/W$, it suffices to show

$$(\rho^2+2\rho)\left (\frac{\rho^2}{2\rho+2}+r \right ) +\rho^2 +2\frac{\rho^2}{2\rho+2} \leq (\rho^2+2\rho)\sqrt{(t+r)^2+2(t+r)} $$

$$(\rho^2+2\rho)\left (\frac{\rho^2}{2\rho+2}+r \right) +\rho^2 +2\frac{\rho^2}{2\rho+2} \leq$$
$$ \frac{\rho^2+2\rho}{2\rho+2}\sqrt{\rho^4+2\rho^2(2\rho+2)r+(2\rho+2)^2r^2+
2(2\rho+2)\rho^2+2(2\rho+2)^2r} $$

$$(\rho^2+2\rho)\rho^2 +(\rho^2+2\rho)r(2\rho+2)+\rho^2(2\rho+2)+2\rho^2 \leq $$
$$ (\rho^2+2\rho)\sqrt{\rho^4+2\rho^2(2\rho+2)r+(2\rho+2)^2r^2+
2(2\rho+2)\rho^2+2(2\rho+2)^2r} $$

$$\rho^4+4\rho^3+4\rho^2+(2\rho^3+6\rho^2+4\rho)r
   \le  $$ 
 $$ (\rho^2+2\rho)\sqrt{\rho^4+4\rho^3+6\rho^2 +(2\rho+2)^2 r^2+
4(\rho+1)(\rho^2+2\rho+2)r} $$

This can be shown by verifying that the following inequalities hold:
$$(\rho^4+4\rho^3+4\rho^2)^2 \leq (\rho^2+2\rho)^2(\rho^4+4\rho^3+6\rho^2) $$

$$(2\rho^3+6\rho^2+4\rho)^2r^2 \leq (\rho^2+2\rho)^2 (4\rho^2+8\rho^2+4)r^2   $$

and
$$ 2 (\rho^4+4\rho^3+4\rho^2) (2\rho^3+6\rho^2+4\rho)r \leq (\rho^2+2\rho)^2 4(\rho+1)(\rho^2+2\rho+2)r$$

\end{document}